\newcommand{\set}[1]{#1} 
\newtheorem{theorem}{Theorem}
\date{\today}
\begin{document}

\title{Temporal motifs in time-dependent networks}
\author{Lauri Kovanen$^1$, M\' arton Karsai$^1$, Kimmo Kaski$^1$,  J\'anos Kert\'esz$^{1,2}$ and Jari Saram\" aki$^1$}
\address{$^1$ Department of Biomedical Engineering and Computational Science, Aalto University School of Science, P.O. Box 12200, FI-00076, Finland}
\address{$^2$ Institute of Physics,  BME, Budapest, Budafoki \'ut 8., H-1111, Hungary}
\ead{lauri.kovanen@aalto.fi}

\begin{abstract}
Temporal networks are commonly used to represent systems where connections between elements are active only for restricted periods of time, such as networks of telecommunication, neural signal processing, biochemical reactions and human social interactions. We introduce the framework of \emph{temporal motifs} to study the mesoscale topological-temporal structure of temporal networks in which the events of nodes do not overlap in time. Temporal motifs are classes of similar event sequences, where the similarity refers not only to topology but also to the temporal order of the events. We provide a mapping from event sequences to colored directed graphs that enables an efficient algorithm for identifying temporal motifs. We discuss some aspects of temporal motifs, including causality and null models, and present basic statistics of temporal motifs in a large mobile call network.
\end{abstract}

\pacs{89.75.-k, 05.45.-Tp, 89.75.Hc}

\maketitle

\section{Introduction}

The network approach, where interacting elements are represented as nodes and interactions as edges, has become widely used in the study of complex systems \cite{NewmanBarabasiWatts2009,NewmanBook2010}. Although this approach unquestionably discards many details, it has turned out to provide much insight into the function and dynamics of the systems in question. Many large networks display similar properties on the global scale, such as broad degree distributions, short path lengths, and abundance of triangles; on the mesoscopic level, complex networks often display community structure \cite{Fortunato2010}. There is much variation in the mesoscale structure of different networks \cite{Lancichinetti2010,OnnelaTaxonomy2010}, reflecting different underlying functional and dynamical mechanisms. Such differences can also be observed in the relative significance of \emph{motifs}, sets of small topologically equivalent subgraphs \cite{Shen-Orr_NatureGenetics2002,Milo_Science2002}. The concept of motifs has also been generalized to unweighted \cite{Shen-Orr_NatureGenetics2002,Milo_Science2002} and weighted \cite{Onnela_PRE_2005} networks.

Static networks are often time-aggregates of systems where connections are not continuously active but established only during limited periods of time. This temporal aspect turns out to be crucial for processes like spreading of information and electronic viruses in communication networks \cite{Holme_PRE2005, Kossinets2008, Centola_Science2010, VazquezBurstySpreadingPRL2007, Iribarren2009,   Karsai2011, Miritello2011}, epidemiological applications \cite{Rocha2011,Lee2010}, and signal processing in the brain (see, \emph{e.g.}, \cite{valencia2008, Dimitriadis2010}). Sometimes temporal aspects such as link activation frequencies are incorporated in the static network representation as link weights, which are then assumed to affect dynamic processes like spreading in probabilistic mean field manner. However, it has recently become clear that temporal inhomogeneities not captured by this approach have an important effect on many processes \cite{VazquezBurstySpreadingPRL2007, Iribarren2009,   Karsai2011, Miritello2011, Rocha2011}.

In this article we use the \emph{temporal networks} approach (see, \emph{e.g.}, \cite{cui2010} and \cite{Holme_arXiv2011} for a review) to study the details of link activations without projecting out the temporal dimension. While static networks consist of nodes and edges, temporal networks consist of nodes and \emph{events}. A (directed) event $e_i = (n_{i,1}, n_{i,2}, t_i, \delta_i)$ connects the two nodes $n_{i,1} \to n_{i,2}$ only during the time interval from $t_i$ to $t_i+\delta_i$. Our presentation uses directed events for generality, but undirected events can be used with minimal changes. We restrict ourselves to the case where nodes cannot participate in simultaneous events, \emph{i.e.}, at any given time at most one event can be assigned to a node.

It is reasonable to expect temporal networks to have mesoscale structures both in topology and time. These structures are likely to reflect the function of the system even better than mesoscale structures in static networks and thus their characterization can improve our understanding of various complex systems, from the nature of human social interactions and information processing by groups to temporal patterns that determine the outcomes of dynamical processes like spreading. Nicosia \emph{et al.}~have studied strongly and weakly connected components in temporal networks \cite{Nicosia2011}. Braha and Bar-Yam \cite{braha_bar_yam} have studied motifs in static snapshot networks, aggregated over one day of email data, and found that dense subgraphs are overrepresented. Bajardi \emph{et al.}~\cite{Bajardi_arXiv2011} have defined dynamical motifs as sequences of connected events belonging to adjacent time windows of network aggregation. In essence, these are time-respecting paths \cite{kempe_etal,Holme_PRE2005}, that is, linear chains of events, and are thus different from the temporal motifs discussed in this paper. Zhao \emph{et al.}~\cite{Zhao_CIKM2010} have studied communication motifs in electronic social networks with an approach that has some similarity to the approach we take in this paper -- they, too, consider subsets of communication events where the time between consecutive events sharing a node is within a chosen threshold time. However, in their analysis, the time dimension of such temporal subgraphs appears to have been projected out by projecting the patterns into static subgraphs.

The \emph{temporal motifs} we introduce here can be used to study the full mesoscale topological-temporal structure of temporal networks. We also present an efficient algorithm for identifying all temporal motifs in empirical data sets on temporal graphs, that is, time-stamped sequences of events between nodes. In static networks the motifs are---in a very general sense---defined as classes of isomorphic, connected subgraphs. We define temporal motifs analogously, first by defining connected subgraphs in temporal networks and then by extending the definition of isomorphism such that it also takes into account the temporal information in these subgraphs.

As an example, in a social communication network one might detect an event sequence where Alice calls Bob, who then calls Carol and Dave. A similar sequence might be observed to often take place between the same people, as well as between other sets of four individuals. All these sequences are members of the same class, which we call a temporal motif. In genetic regulation data the event sequence would correspond to regulatory interactions switching on and off as the intercellular system performs its function. In addition to providing insight into the operation of the system under study, temporal motifs allow studying similarities and differences of temporal networks, as originally proposed for static motifs in \cite{Milo_Science2002}. In addition they may help in building models of network evolution \cite{Kumpula_PRL2007}.

We start with a formal definition of temporal motifs in Section \ref{sec:definitions}, and then cover the main ideas of the identification algorithm. In Section \ref{sec:generalizations} we discuss some useful generalizations and show how they can also be implemented efficiently. Section \ref{sec:evaluation} glances at different methods for evaluating the significance of the observed motif counts. Finally, in Section \ref{sec:results} we use the methods to identify temporal motifs in a large temporal network constructed from mobile phone data and discuss the insights the motifs provide. A detailed account on all algorithms is provided in the Appendix.


\section{Temporal motifs}
\label{sec:definitions}

Static motifs are classes of isomorphic subgraphs. While there is variation in exactly what kind of subgraphs are studied, it is practically always required that these subgraphs be connected. For static graphs connectivity means that there is a path between all pairs of nodes, or equivalently that there is a sequence of mutually \emph{adjacent} edges between all pairs of edges, where two edges are adjacent if they have one node in common.

In temporal networks the definition of adjacency should intuitively also include time; two calls made by the same person a month apart are hardly close to each other. We consider two events $\Delta t$\emph{-adjacent} if they have at least one node in common and the time difference between the end of the first event and the beginning of the second event is no longer than $\Delta t$. Equivalently, two events are $\Delta t$\emph{-connected} if there exists a sequence of events $e_i=e_{k_0} e_{k_1} \ldots e_{k_n}=e_j$ such that all pairs of consecutive events are $\Delta t$-adjacent.\footnote{Note that this sequence does not need to be a journey, i.e.\ the events need not be temporally ordered.}

Using these definitions, a \emph{connected temporal subgraph} consists of a set of events such that all pairs of events in it are $\Delta t$\emph-connected. This ensures that subgraphs are connected both topologically and temporally\footnote{This is different from the approach taken by Zhao \emph{et al.} \cite{Zhao_CIKM2010}, where temporal subgraphs may be topologically disconnected.}. While this definition could already be used as a basis of temporal motifs, it suffers from the same shortcoming as its static cousin: in some simple cases the number of connected subgraphs explodes. For example an $n$-star where all events take place within $\Delta t$ contains ${n \choose k}$ connected temporal subgraphs with $k$ events, which would make the resulting motif statistics difficult to interpret in any intuitive fashion.

With static motifs the most common restriction is to require the subgraphs to be both connected and induced, i.e. require that they include all edges between the nodes in the subgraph. While this choice does reduce the number of subgraphs and makes it easier to interpret the resulting motifs, it unfortunately fails to solve the problem with the $n$-star above.

With temporal networks our choices are not as restricted. One alternative is to consider only those connected subgraphs where all $\Delta t$-connected events of each node are consecutive. This not only solves the problem with the $n$-star---we now get $n-k+1$ subgraphs with $k$ events---but also offers an intuitive interpretation: each subgraph takes into account all relevant events for each node within the time span covered for that node, in the sense that no events can be skipped. We call connected subgraphs that satisfy this requirement \emph{valid temporal subgraphs} and denote them by $\set{E}^*$. Figure \ref{fig:temporal_subgraphs_illustration} illustrates the concept.

\begin{figure}[t] \centering
\includegraphics[width=12cm]{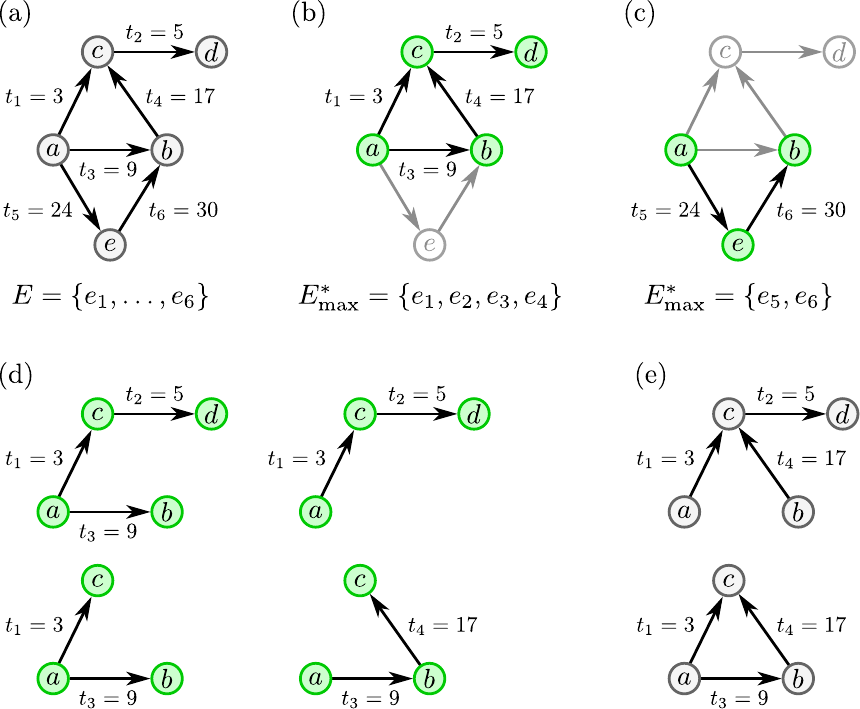}
\caption{\label{fig:temporal_subgraphs_illustration} \textbf{(a)} An example event data set $\set{E}$ with six events. Durations have been omitted for simplicity. With $\Delta t = 10$ there are two maximal subgraphs, shown in \textbf{(b)} and \textbf{(c)}. \textbf{(d)} Valid subgraphs contained in the maximal subgraph in (b). In addition to these the maximal subgraph itself and all unit subgraphs are valid subgraphs. The maximal subgraph in (c) does not contain other valid subgraphs than the maximal and unit subgraphs. \textbf{(e)} Event sets that are contained in (b) but are not valid subgraphs: the upper one because it is not $\Delta t$-connected, the lower one because it does not include all consecutive $\Delta t$-connected events of node $c$.}
\end{figure}

\emph{Temporal motifs} are now defined as classes of isomorphic valid subgraphs, where the isomorphism is taken to include also the similarity of the temporal order of events. Accordingly, two temporal subgraphs are isomorphic if they are topologically equivalent and the order of their events is identical. In cases where the requirement for the identity of the full order of events is too strict, it can easily be weakened. This is discussed in Section \ref{sec:generalizations}.

Some special temporal motifs are worth mentioning. The unit set $\set{E}^* = \{ e_i \}$ is trivially a valid subgraph for all events, and hence the smallest temporal motif contains only one event. For every event $e_i$ there is a unique maximal subgraph $\set{E}_{\max}^*$ that contains $e_i$ and in which all event pairs are still $\Delta t$-connected. The maximal subgraph is always also a valid subgraph. When motifs are based only on maximal subgraphs they are called \emph{maximal motifs}.

The presented definition for temporal subgraph is meaningful only when each node is involved in at most one event at a time. When overlapping events are allowed, the large number of different situations that can arise in the most general case makes it more difficult to define temporal subgraphs in such a way that the results could still be easily interpreted. Yet the prospects of such a definition are enticing, as it would allow for the exploration of almost any temporal network data, for example transportation networks \cite{Pan11} and time-varying brain functional networks \cite{ValenciaBrain}.

\section{Algorithm for identification of temporal motifs}
\label{sec:identification}

Because maximal subgraphs are temporally separated from all other events by at least time $\Delta t$, all subgraphs are fully contained in some maximal subgraph. Based on this observation the process of identifying all temporal motifs in a given event set $\set{E}$ can be separated to three parts:
\begin{enumerate}
\item Find all maximal connected subgraphs $\set{E}_{\max}^*$.
\item Find all valid subgraphs $\set{E}^* \subseteq \set{E}_{\max}^*$.
\item Identify the motif corresponding to $\set{E}^*$.
\end{enumerate}

To find the maximal subgraph where $e_i$ belongs to, we start from $e_i$ and iterate forward and backward in time to find all $\Delta t$-adjacent events; this process is then repeated recursively with all new events encountered. Assuming the $\Delta t$-adjacent events can be found in constant time, the time complexity of this step is $O(|\set{E}_{\max}^*|)$. Since the same maximal set is discovered starting from any event in it, the total time complexity of this part is $O(|\set{E}|)$.

For the second part, consider an undirected graph $G$ where the vertices corresponds to events in $\set{E}_{\max}^*$ and there is an edge between two vertices if those events are $\Delta t$-adjacent and consecutive for either node. Now each valid subgraph contained in $\set{E}_{\max}^*$ corresponds to some connected vertex set of $G$ (see \ref{appendix:subgraphs} for proof), and the problem of finding all valid temporal subgraphs reduces to identifying all induced subgraphs of $G$ and checking that the events of each node are consecutive. The algorithm for identifying valid subgraphs is given in \ref{appendix:subgraphs}.

Identifying the motif for subgraph $\set{E}^*$ requires solving the isomorphism problem such that we also include information about the order of the events. We do this by mapping all relevant information into a directed and coloured\footnote{In a coloured graph each vertex has an additional property called colour. We represent both actual nodes and events as vertices and need colours to distinguish the two.} graph as illustrated in Figure \ref{fig:motif_mapping}, for which the isomorphism can be readily solved with existing algorithms. In practice we calculate for this graph its \emph{canonical form}, a labeling of vertices that is identical for all isomorphic graphs, so that we can easily tell if two valid subgraphs correspond to the same motif. Finding the canonical form is a non-trivial task, but many efficient algorithms have been developed for this purpose; the one we used is called \emph{bliss} and described in \cite{Junttila_ALENEX2007}.

As a final step, to make temporal motifs more accessible we convert the information about the order of events back into plain integers. Figure \ref{fig:motif_mapping}e shows a concise presentation of the motif corresponding to the original temporal subgraph in Figure \ref{fig:motif_mapping}a.

\begin{figure}[t] \centering
  \includegraphics[width=14cm]{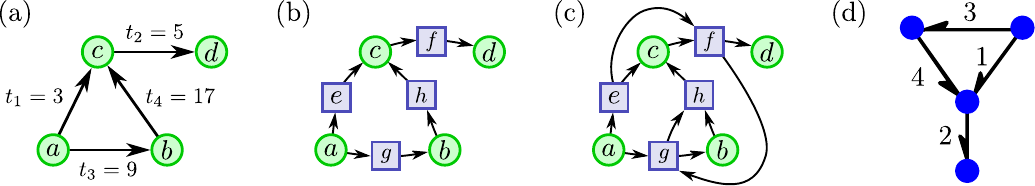}
  \caption{\label{fig:motif_mapping} Illustration of the algorithm for     identifying temporal motifs. \textbf{(a)} A valid subgraph $\set{E}^*$ with four events. \textbf{(b)} A vertex is created for each event and edges are added to connect them to the corresponding nodes. Colours are used to distinguish between the two types of vertices; the labels of the event vertices are arbitrary. \textbf{(c)} Directed edges are created between event vertices to denote their order: from the first event ($t_1=3$) to the second ($t_2=5$), from the second to the third ($t_3=9$) and from the third to the fourth ($t_4=17$). When durations are included we use the order of the starting times. A canonical labeling is then calculated for this graph; all temporal subgraphs with that are isomorphic at this stage will yield the same canonical labeling. \textbf{(d)} A concise presentation for the temporal motif. The numbers next to edges denote the order of the events. Note that the numbers are always on the side of the arrow heads.}
\end{figure}

\section{Flow motifs and partial order of events}
\label{sec:generalizations}

\begin{figure}[t] \centering
  \includegraphics[width=0.95\textwidth]{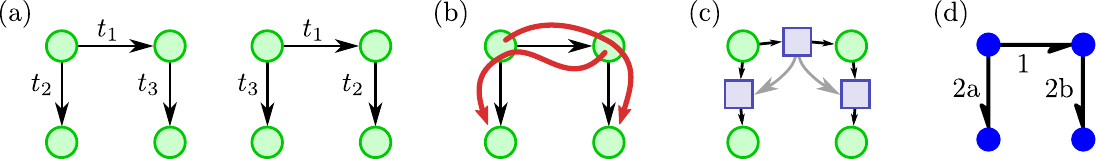}
  \caption{\label{fig:journeys_motivation} \textbf{(a)} Two valid subgraphs that differ only in the mutual order of events $e_2$ and $e_3$. \textbf{(b)} If the events are mobile phone calls, the possible flow of information (red arrows) is identical in the two subgraphs. The mutual order of $e_2$ and $e_3$ is irrelevant. \textbf{(c)} The temporal flow motif corresponding to both event sets in (a) where the only requirement is that $e_1$ takes place before the other two events. \textbf{(d)} Compact notation for the temporal motif in (c). As described in the text, with this notation $1 = (1,\emptyset) < (2,\{a\}) = 2a$ and $1 < 2b$, but the order of $2a$ and $2b$ is undefined.}
\end{figure}

Assuming that the node colours are denoted by integers---as is often the case---we could have used the colours to mark the order of events in Figure \ref{fig:motif_mapping}c instead of putting in additional links. The edge notation, however, has another benefit: it can be used to denote a \emph{partial order} of events. Unlike \emph{total order}, partial order does not necessarily define the order of all pairs. For example, an order where $e_i$ takes place before both $e_j$ and $e_k$, but the mutual order of $e_j$ and $e_k$ remains undefined, is a partial order and as such cannot be represented with integer labels.

To see why this is useful, consider mobile phone communication where information flows both ways---both can talk regardless of who placed the call---and the two temporal subgraphs in Figure \ref{fig:journeys_motivation}a that differ only in the order of the last two events. If we are only interested in the flow of information, the two subgraphs are identical because they allow the same flows, shown in Figure \ref{fig:journeys_motivation}b. In general such flows are known as \emph{time-respecting paths} or \emph{journeys} \cite{Holme_PRE2005,kempe_etal, BBXuan03}: a sequence of events $e_1e_2 \ldots e_n$ such that consecutive events are adjacent and $t_i < t_{i+1}$ $\forall i=1,\ldots,n-1$.

In a \emph{flow motif} the order of two events is restricted if and only if it is relevant to the flow pattern in the subgraphs, that is, only when reversing the mutual order of two events would either create a new journey or remove an existing one. Because journeys must progress via adjacent events it is enough to place restrictions on the order of adjacent events; all longer journeys will be automatically included. If the flow is undirected, such as information flow during phone calls, preserving journeys (and not making new ones) corresponds to restricting the order of all adjacent events as shown in Figure \ref{fig:journeys_motivation}c. In the case of directed flow we would only restrict the order of events that meet head-to-tail; no flow is possible if the events meet either head-to-head or tail-to-tail.

If the events have a partial order we can of course no longer use integers to denote this order as was done in Figure \ref{fig:motif_mapping}d. Arbitrary sets could be used to represent partial orders by defining $x < y~\Leftrightarrow~x \subset y$, but they would render the most common simple motifs unnecessarily complicated. We propose a notation that uses sets when necessary but falls back to plain integers when possible. We label events with pair $(r,\set{s})$ where $r \in \mathbb{N}$ and $s$ is a set, and define order relation as
\[
(r_i,\set{s}_i) < (r_j,\set{s}_j) ~ \Leftrightarrow ~ r_i < r_j \land \set{s}_i \subseteq \set{s}_j~~~.
\]
By choosing $\set{s}_i = \emptyset~\forall\,i$ whenever possible the notation reduces to a comparison of integers because in this case $\set{s}_i \subseteq \set{s}_j~\forall i,j$. To make the notation more compact we write $(r,s)$ as `$rs$' or simply `$r$' if $s = \emptyset$. For example in Figure \ref{fig:journeys_motivation}d the label `$1$' corresponds to $(1,\emptyset)$ and `$2a$' to $(2,\{a\})$. In \ref{appendix:labels} we present an algorithm for finding such labels for any partial order of events.

\section{Evaluation of motif statistics}
\label{sec:evaluation}

The standard interpretation of a static motif count, i.e. the number of subgraphs in the motif, is presented in terms of a null model  \cite{Shen-Orr_NatureGenetics2002, Milo_Science2002}. The null model is usually a conditionally randomized version of the empirical network, e.g. a configuration model with the same degree sequence as the empirical network. If for some motif the count significantly exceeds that of the null model, the hypothesis (i.e., lack of correlations reflected in the motif) is rejected and the motifs are considered to be structurally significant. However, as pointed out in \cite{Artzy-Randrup_Science2004}, the proper choice of the null model is non-trivial. If the null model is far from having any realistic features, then it is no wonder that it is rejected but this plain fact does not tell anything about the nature of the correlations. The standard z-score analysis compares the difference of the empirical motif count and the average value from the null model to the variance of the latter. This Gaussian assumption about the null model has no a priori justification.

This problem is even more severe for temporal motifs. Here the most obvious randomized reference is time-shuffling \cite{Bajardi_arXiv2011}: given a random permutation $\phi$ of events we count the occurrence of motifs in time-shuffled data set where event $e_i$ occurs at time $t_{\phi(i)}$. Unlike in the time-shuffled reference, in most complex systems temporal distributions are far from Poissonian and contain strong temporal correlations \cite{Karsai2011, Miritello2011, Rocha2011, Bajardi_arXiv2011}. The situation is improved if we use parametrized null models, where in some limit the empirical situation is restored. Then we can hope that by monitoring the parameter dependence of the deviations from the null model we can learn about the nature of the correlations.

Another intuitive choice is to compare the occurrence of motifs to a time-reversed reference \cite{Bajardi_arXiv2011}. Since causality depends on the direction of time but correlation does not, this comparison should highlight motifs whose occurrence at least partially results from causality. On the other hand, if a motif is abundant only because of correlations, it should be equally common in both the data and the time-reversed reference. Note that it is not necessary to explicitly construct the time-reversed reference: the occurrence of a motif in the time-reversed data is equal to the occurrence of a time-reversed motif in the actual data.

Considering the problems with null models, it seems to be important to compare parts of the data with \emph{itself}. If there are different types of nodes and events, we can study whether the occurrence of temporal motifs differs between them. Also, we can always study the occurrence of motifs at different times. In this way we would gain information about the relative weights of the motifs without any reference to arbitrary null models.

When analyzing motif counts we need to take into account that they are trivially correlated with average activity and correlations of adjacent events. To get more insight into the occurrence patterns of temporal motifs we suggest looking at the relative occurrence of different motifs. Suppose that we have two sequences of motif counts---for example the counts of all 3-event temporal motifs in the empirical data and the reference---and the relative frequencies of the $i^{\textrm{\small th}}$ motif are $p_i$ and $q_i$. The symmetrized Kullback-Leibler divergence measures the relative entropy of these two distributions and is defined as
\[
D_{\textrm{\small KL}}(\{ p\},\{ q\}) = \sum_{i=1}^n p_i \log \frac{p_i}{q_i} +  \sum_{i=1}^n q_i \log \frac{q_i}{p_i}~~~
\]
provided that $p_i > 0$ and $q_i > 0$ $\forall i$ (we exclude motifs that are not present in either).

The Kullback-Leibler divergence places more weight on common motifs, and even large relative differences in the rarest motifs do not change the value too much. Kendall`s $\tau$, on the other hand, measures the similarity of the ordered sequences and places an equal weight on all the motifs regardless of their count. Given two motif sequences of length $n$, both sorted by count, Kendall`s $\tau$ is defined as
\[
\tau = \frac{R^+ - R^-}{\frac{1}{2}n(n-1)}
\]
where $R^+$ is the number of motif pairs that are in the same order in both sequences and $R^-$ the number of motif pairs in different order. The value $\tau = 1$ is reached when the two sequences are identical, and $\tau = -1$ when they are in opposite order.

\section{Results}
\label{sec:results}

We use temporal motifs to study the short time-scale structure of mobile phone calls of a single European mobile phone operator. The data covers a period of 120 days, but we exclude motifs that occur entirely on the first or the last day of this period to remove possible bias caused by the limits of the data. The remaining data contains $320$ million mobile phone calls between nearly $9$ million customers. The data has been mutualized by removing all events on unidirectional edges, i.e. we require that the communication is reciprocated on each edge. The time window is $\Delta t = 10\textrm{ min}$ except in Figure \ref{fig:maximal_motif_distributions} where other time windows are explored. With $\Delta t = 10\textrm{ min}$, 35 \% of events are $\Delta t$-adjacent to at least one other event and hence non-trivial temporal motifs are not all that rare. All results with time-shuffled references have been averaged over 5 independent runs.

\begin{figure}[t] \centering
 \includegraphics[width=16cm]{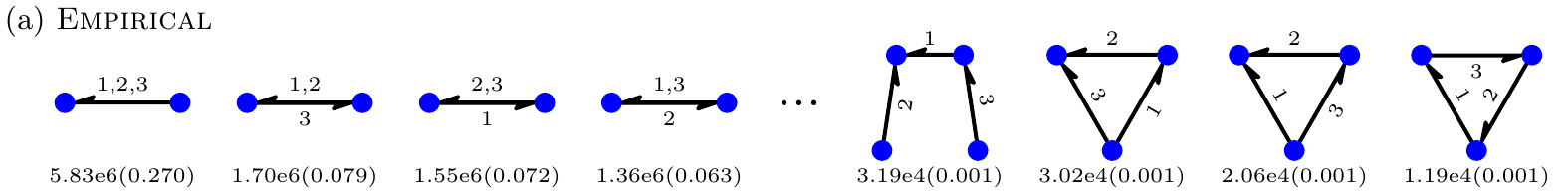}\\
\vspace{1em}
 \includegraphics[width=16cm]{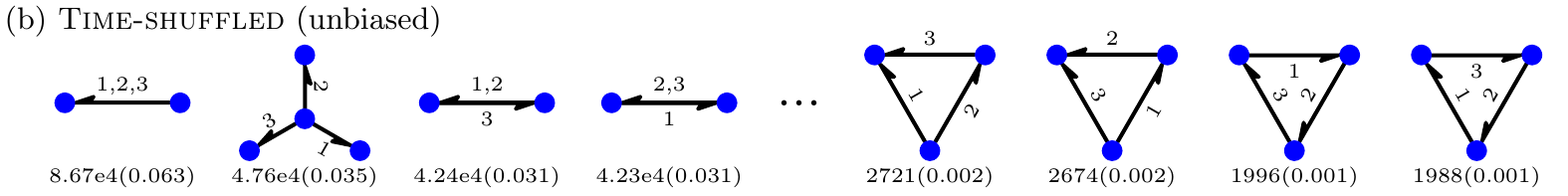}\\
\vspace{1em}
 \includegraphics[width=16cm]{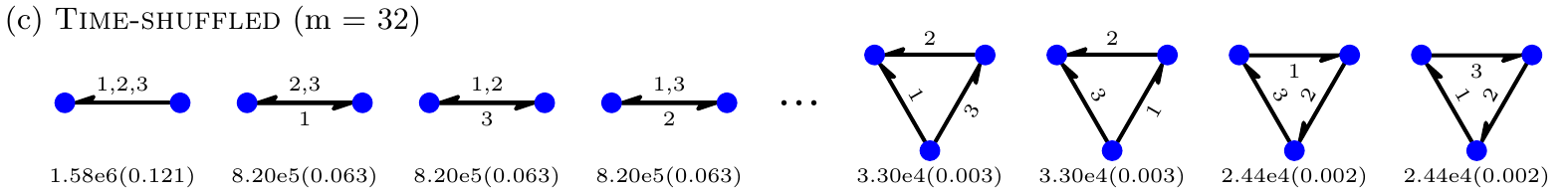}
 \caption{\label{fig:most_and_least_common_motifs} The four most    common (on left) and least common (on right) motifs in \textbf{(a)}    the empirical data, \textbf{(b)} unbiased time-shuffled reference    and \textbf{(c)} the biased reference with bias strength $m=32$.    The values below each motif denote the total count and, in parenthesis, the fraction out of all motifs with three events.}
\end{figure}

Figure \ref{fig:most_and_least_common_motifs}a shows the four most and least common 3-event temporal motifs (there are 68 3-event motifs in total) in the
data, and Figure \ref{fig:most_and_least_common_motifs}b the same in the time-shuffled reference. Unsurprisingly, the number of non-trivial motifs in the reference is lower---only 8.6 \% of events are $\Delta t$-adjacent to some other event---but the two cases still appear qualitatively similar. The most common motifs illustrate the bursty nature of the mobile phone data, while the least common motifs are triangles even though triangles are often considered to be the building blocks of social networks. The distribution of different motifs is more balanced in the reference: in the empirical data the most common 3-event motif makes up 27 \% of all 3-event motifs, but only 6.3 \% in the time-shuffled reference.

To make the comparison more interesting, we add a bias to the time-shuffling that favors shorter inter-event times and therefore increases the number of motifs. The shuffling is done using Markov chain Monte Carlo sampling, which is also necessary to enforce the condition that each user is involved in at most one event at a time. In the unbiased case each step consists of selecting two events uniformly at random, $e_i$ and $e_j$, and switching their times if this does not result in overlapping events for any of the (at most) four nodes involved. To create a single randomized reference we make $5|E|$ such switches, which equals on average 10 switches per event.

To introduce a bias, instead of picking only two events at each step we randomly select one target event $e_i$ and $m \geq 1$ candidates,  $(e_{j_1},\ldots,e_{j_m})$, and then make a switch with the candidate that places $e_i$ closest to its new adjacent events. To measure this closeness we use the geometric average of time differences to the temporally closest adjacent events.\footnote{As we are only interested in the order of these averages and not their exact values, comparing geometric averages is equal to comparing the arithmetic average of logarithms of time differences. This puts more importance to small time differences than plain arithmetic average.} The parameter $m$ controls the bias strength: the more candidates there are, the more likely it is to find one close to $e_i$. Setting $m=1$ gives the normal unbiased randomization.

Figure \ref{fig:most_and_least_common_motifs}c shows the most and least common motifs in the biased reference with $m=32$. This reference naturally has higher motif counts than the unbiased reference, but the total number of 3-event motifs is still only 60 \% of that of the empirical data. Perhaps surprisingly, the least common motif is now twice as common as in the empirical data. In the empirical data this motif is uncommon partly because the events take place in a non-causal order, whereas the order has little significance in the reference. Furthermore, because this kind of subgraph takes place primarily due to correlations, it is likely that the nodes have other events at approximately the same time. If these events take place between those in the triangle, the subgraph would no longer be valid (see lower subgraph in Figure \ref{fig:temporal_subgraphs_illustration}e). In the references the maximal subgraphs are smaller, which makes it less likely that such interfering events would destroy the validity. The bias makes triangles more common while keeping the maximal subgraphs small, and therefore the triangles are more often valid.

\begin{figure}[t] \centering
\mbox{
\includegraphics[width=11.7cm]{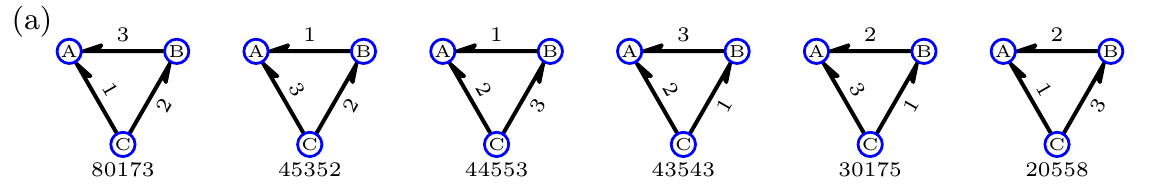}
\includegraphics[width=3.9cm]{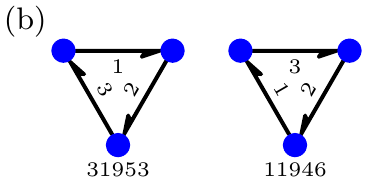}}
\caption{\label{fig:triangle_motifs} The two different kinds of directed triangle motifs with 3 events. Both groups have been ordered by count in the empirical data that is also shown below the motifs. All motifs in (a), as well as those two in (b), differ only in the order of events.}
\end{figure}

As a further example clarifying this point, we present in Figure \ref{fig:triangle_motifs} all motifs based on the different directed triangles with 3 events. The six motifs in Figure \ref{fig:triangle_motifs}a would by equally common in the time-shuffled reference, but in the empirical data we observe a 4-fold difference between the most and least common triangle. There are two factors that explain this: \emph{burstiness} and \emph{causality}. Burstiness appears in the fact that in the four most common motifs the two calls made by $C$ are consecutive; in the two least common motifs they are not. Causality is most apparent when comparing the most and the least common motif. In the most common motif the caller of the second call (C) knows about the first call (because he made it himself), and the caller of the third call (B) could know about both previous calls. In the least common motif the caller of the second call (B) cannot know about the first one, and the caller of the third call (C) cannot know about the call made by B. The most common motif is both bursty and causal, while the least common is neither.

Causality is an obvious explanation also for the counts in Figure \ref{fig:triangle_motifs}b: the triangle where events could cause one another is three times as common as the other one. Note that these two motifs are time-reversals of each other, i.e. if the time were reversed, each motif of the first kind would turn into the second, and vice versa.

\begin{figure}[t] \centering
  \includegraphics[width=12cm]{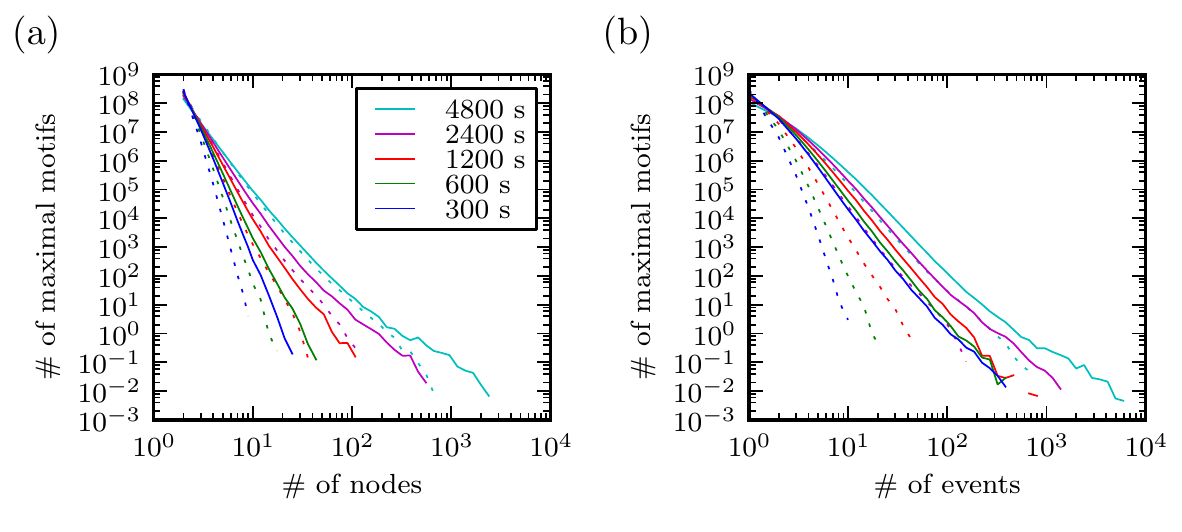}
  \includegraphics[width=12cm]{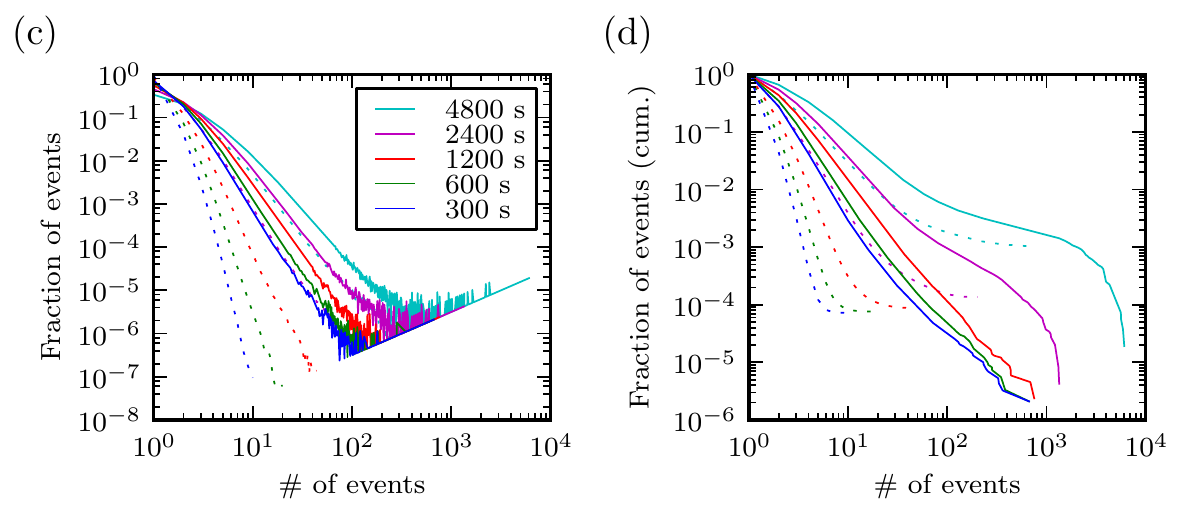}
  \caption{\label{fig:maximal_motif_distributions} Number of maximal motifs of different size when the size is measured by \textbf{(a)} the number of nodes and \textbf{(b)} the number of events in the motif. In both plots the values larger than 10 have been binned with logarithmic bins using factor $1.2$. \textbf{(c)} Fraction of events in motifs of a given size, and \textbf{(d)} the corresponding cumulative distribution. In all plots the solid lines correspond to empirical data, dotted lines to the unbiased time-shuffled reference.}
\end{figure}

Figures \ref{fig:maximal_motif_distributions}a--b show the number of maximal motifs of different size for different values of $\Delta t$, measured either by the number of nodes or by the number of events in the motifs. The distributions are broad for all time windows, and those with larger $\Delta t$ are naturally broader. Figures \ref{fig:maximal_motif_distributions}c--d show the fraction of events in maximal motifs of different size. Comparing the distributions with $\Delta t = 1200$ and $2400$ suggests that between these values a giant temporal component is beginning to form. The distribution with $\Delta t = 1200$ is very close to a power-law, as both the density and cumulative distributions are straight lines. When $\Delta t = 2400$ the number of events contained in very large maximal motifs is starting to grow. Increasing the time window further beyond $\Delta t = 4800$ would at some point give birth to a giant temporal component: a large fraction of events would become $\Delta t$-connected.

\begin{figure}[t] \centering
  \includegraphics[width=14cm]{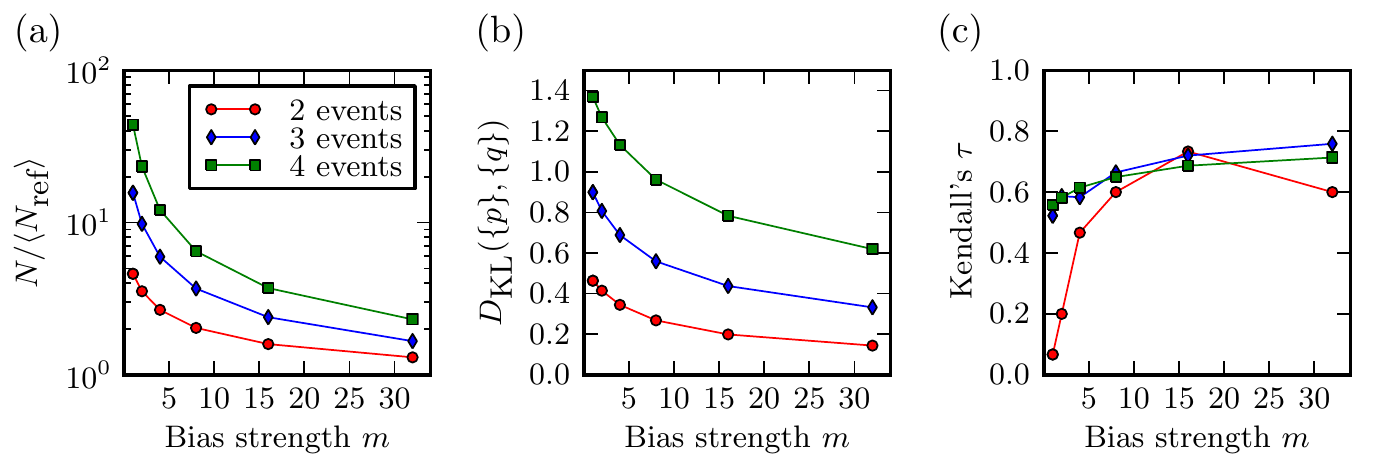}
  \caption{\label{fig:bias_results} \textbf{(a)} The ratio of total number of motifs with a given number of events in the actual data versus the time-shuffled reference. The lines correspond to motifs with 2, 3 and 4 events, from bottom to top. \textbf{(b)} The symmetric Kullback-Leibler divergence between motifs in the actual data and the time-shuffled reference. \textbf{(c)} Kendall's $\tau$ between motifs in the actual data and the time-shuffled reference. The motif counts of the references are averages of 5 different runs for each value of the bias strength.}
\end{figure}

Finally, Figure \ref{fig:bias_results}a shows that if we only look at the number of motifs, the biased references seem to approach the actual data as we increase the bias strength. Similar behaviour is seen in Figure \ref{fig:bias_results}b for the symmetrized KL divergence calculated between the actual data and the reference, and also for Kendall`s $\tau$ in Figure \ref{fig:bias_results}c. However, in Figure \ref{fig:most_and_least_common_motifs} we saw that with $m=32$ there are already motifs that are more common in the reference than in the real data; it is therefore not possible that the reference becomes identical to the data when the total number of motifs become equal. In addition, motifs with more events are relatively more common in the empirical data regardless of the bias. A qualitative difference between motif sequences remains even if we were able to match the total number of motifs.

\section{Conclusions}

In this paper we have introduced the concept of temporal motifs, and provide a mapping between temporal subgraphs and colored directed graphs that allows an efficient algorithm for their identification. Using this algorithm we can locate and make statistics about the main mesoscopic building blocks of temporal networks, which will carry great importance for understanding their functions and underlying mechanisms.

While the focus of this article is more on technical definitions, algorithms, and general aspects of the evaluation of motif counts, we have also presented some results on a huge temporal network based on mobile phone call data. Some conclusions can already be drawn. Of all temporal motifs with three events, the most common ones involve only two nodes. This is in accord with the independent finding that burstiness in human communication is mostly a link property \cite{Karsai_tobepub}. Another interesting---though not surprising---result is that the motifs which allow causal interpretations are more common. The fat-tailed distributions of maximal motifs are in agreement with observations about the correlations in the network \cite{Karsai2011}, although our present approach gives a more detailed insight about the mechanisms. Our initial results also show that the temporal motifs are common enough to have an impact on dynamics and too complex to be explained by simple temporal correlations. The occurrence of motifs is intuitively sensible, as they highlight two universal properties of human communication, namely burstiness and causality.

There are a number of directions to pursuit in the future. For example here we have ignored the case where nodes can have simultaneously multiple events. This generalization presents a challenge both in defining the valid subgraphs and in developing the algorithms for their identification. Further research is also needed to develop measures for analyzing the occurrence patterns of motifs.

The presented examples are far from being able to illustrate the full richness of phenomena that can be explored with temporal motifs. Currently we are in the course of investigating several temporal networks and hope that it this approach will be useful in a broad range of studies, even more so as the presented algorithm is able to handle large networks. As empirical temporal networks are becoming more and more common, we expect temporal motifs and their analysis to prove useful in many different fields of science.

\ack

The project ICTeCollective acknowledges the financial support of the Future and Emerging Technologies (FET) programme within the Seventh Framework Programme for Research of the European Commission, under FET-Open grant number: 238597. We acknowledge support by the Academy of Finland, the Finnish Centre of Excellence program 2006--2011, project no.\ 213470. JK is supported by the Finland Distinguished Professor (FiDiPro) program of TEKES.

We would like to thank Renaud Lambiotte for his comments and Albert-L\'{a}szl\'{o} Barab\'{a}si of Northeastern University for providing access to the unique mobile phone data set.



\appendix

\section{Finding temporal subgraphs}
\label{appendix:subgraphs}

The following result is used in Section \ref{sec:identification} to find temporal subgraphs inside maximal temporal subgraphs:
\begin{theorem}
  Let $G(\set{E}^*_{\max})$ be an undirected graph that has a   vertex for each event in $\set{E}^*_{\max}$ and every vertex is   connected to the next and previous $\Delta t$-adjacent event of both   nodes in that event (there are at most four such events). Then every   valid subgraph contained in $\set{E}^*_{\max}$ corresponds to a   connected subgraph of $G(\set{E}^*_{\max})$.
\end{theorem}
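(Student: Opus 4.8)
The plan is to prove the statement in its strongest form: if $E^*\subseteq E_{\max}^*$ is a valid temporal subgraph, then the subgraph of $G(E_{\max}^*)$ \emph{induced} on the vertex set $E^*$ is connected (this is what ``corresponds to a connected subgraph'' means, since an induced subgraph has the most edges of any subgraph on that vertex set). The argument rests on two ingredients: (i) being a valid subgraph, $E^*$ is in particular a connected temporal subgraph, so any two of its events are joined by a sequence of pairwise $\Delta t$-adjacent events lying entirely in $E^*$; and (ii) every single $\Delta t$-adjacency between two events of $E^*$ can be realized by a path inside $G(E_{\max}^*)$ restricted to $E^*$, even though $G$ records only adjacencies of \emph{consecutive} events of a node. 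Granting (ii), the theorem follows by concatenating, for any fixed pair $e_i,e_j\in E^*$, the paths obtained from the successive links of the chain furnished by (i).

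The core is step (ii). Suppose $e_a,e_b\in E^*$ are $\Delta t$-adjacent, with $e_a$ the earlier event and $n$ a node common to both. Let $f_0=e_a,f_1,\ldots,f_m=e_b$ be, in temporal order, all events involving $n$ whose time interval lies between the end of $e_a$ and the start of $e_b$; these are linearly ordered with well-defined consecutive pairs precisely because a node has no overlapping events. First I would record the ``nested interval'' bookkeeping: the interval from the end of $e_a$ to the start of any $f_i$ is contained in the interval from the end of $e_a$ to the start of $e_b$, whose length is at most $\Delta t$, so each $f_i$ is $\Delta t$-adjacent to $e_a$ through $n$, hence $\Delta t$-connected to $e_a$, hence lies in the maximal subgraph $E_{\max}^*$. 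The same bookkeeping shows that the gaps between consecutive $f_i,f_{i+1}$, together with the intervening durations, sum to the $\le\Delta t$ gap between $e_a$ and $e_b$, so each such gap is at most $\Delta t$; thus $f_i$ and $f_{i+1}$ are $\Delta t$-adjacent, and since they are consecutive events of $n$, $\{f_i,f_{i+1}\}$ is an edge of $G(E_{\max}^*)$. It remains to see that each $f_i$ actually lies in $E^*$: $f_i$ involves $n$, sits temporally between the two $E^*$-events $e_a$ and $e_b$ of $n$, and is $\Delta t$-adjacent to $e_a$, so the validity requirement---that the events involving $n$ present in $E^*$ form a contiguous block among the $\Delta t$-connected events of $n$---forces $f_i\in E^*$. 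Hence $f_0f_1\cdots f_m$ is a path in the induced subgraph $G(E_{\max}^*)[E^*]$ joining $e_a$ and $e_b$, establishing (ii).

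To finish, given arbitrary $e_i,e_j\in E^*$, connectivity of $E^*$ provides events $e_i=g_0,g_1,\ldots,g_\ell=e_j$ in $E^*$ with every consecutive pair $\Delta t$-adjacent; applying (ii) to each pair $(g_k,g_{k+1})$ and concatenating the resulting paths shows $e_i$ and $e_j$ lie in one connected component of $G(E_{\max}^*)[E^*]$. As $e_i,e_j$ were arbitrary, $G(E_{\max}^*)[E^*]$ is connected, which is the claim.

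I expect the main obstacle to be definitional rather than combinatorial: pinning down the precise reading of ``all $\Delta t$-connected events of each node are consecutive'' so that it genuinely licenses the inclusion step $f_i\in E^*$, and making sure the chain supplied by connectivity of $E^*$ stays inside $E^*$ (otherwise (ii) cannot be applied link by link). The geometric content---nested intervals yielding the $\le\Delta t$ gap bounds---and the use of the non-overlap hypothesis to linearly order a node's events are routine and can be stated briefly.
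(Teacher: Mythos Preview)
Your proposal is correct and follows the same approach as the paper's own proof: use $\Delta t$-connectedness of $E^*$ to obtain a chain of pairwise $\Delta t$-adjacent events, then invoke the consecutiveness condition to ensure each link of the chain can be realised by edges of $G(E_{\max}^*)$ lying inside $E^*$. The paper's proof states this in two sentences without working out step (ii); your version makes explicit the interpolation through the intermediate events $f_1,\ldots,f_{m-1}$ of the shared node and the nested-interval bound that guarantees consecutive $f_i$ are $\Delta t$-adjacent, which is precisely the content the paper leaves to the reader.
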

\begin{proof}
  Consider a valid subgraph $\set{E}^* \subseteq \set{E}^*_{\max}$ and   the corresponding vertex set in $G$. Because all event pairs in   $\set{E}^*$ are $\Delta t$-connected and the events of every node   are consecutive, there is at least one path between all vertex pairs   in this set. Therefore there is at least one connected subgraph of   $G$ that corresponds to $\set{E}^*$.
\end{proof}

\begin{figure}[t] \centering
\includegraphics[width=14cm]{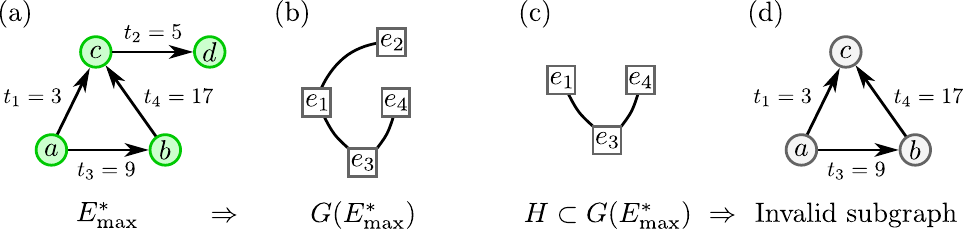}
\caption{\label{fig:connected_subgraphs} \textbf{(a)} An example of a maximal subgraph $\set{E}^*_{\max}$ with $\Delta t = 10$ and \textbf{(b)} the corresponding undirected graph $G(\set{E}^*_{\max})$ used to identify all valid subgraphs contained in $\set{E}^*_{\max}$. \textbf{(c)} A connected subgraph of $G$ and \textbf{(d)} the corresponding temporal subgraph that is not a valid subgraph because the events of node $c$ are not consecutive: $e_2$ takes place between $e_1$ and $e_3$ and is $\Delta t$-connected to them, so it should be included.}
\end{figure}

Note that the inverse is not true: there are connected subgraphs of $G$ whose vertex sets do not correspond to any valid subgraph; an example is given in Figure \ref{fig:connected_subgraphs}. Therefore to identify all valid subgraphs $\set{E}^* \subseteq \set{E}^*_{\max}$ we first need to find all distinct connected subgraphs of $G$ and then ensure that the corresponding subgraphs are valid by checking that for every node the events (that are in $\set{E}^*_{\max}$ and hence $\Delta t$-connected) are consecutive. This check can be carried out with little extra cost while constructing the colored graph needed to calculate the canonical form, as the construction requires going through all events anyway.

A pseudo-code to identify all connected vertex sets of an arbitrary graph (in this case $G(\set{E}^*_{\max})$) is given in Algorithm \ref{alg:identify_subgraphs}. In function \textsc{FindConnectedSets} we first start $|V|$ search trees so that the tree initialized with node $i$ will find all connected sets where $i$ is the smallest node. The nodes in the set $V_{-}$ are excluded from that search tree; initially this set contains all nodes smaller than $i$. The set $V_{+}$ includes nodes where the search can progress, initially all neighbours larger than $i$. Because each search tree finds only sets where $i$ is the smallest node, the trees are necessarily distinct.

The function \textsc{SubFind} first adds the current set to be returned (line 10) and then grows the sets recursively. For each node $i \in V_{+}$, $V_{-}$ is updated by excluding nodes smaller than $i$. Thus each subtree has a different smallest node from those in $V_{+}$ and the subtrees are again distinct. The set $V_{+}$ is updated to contain nodes where the search can progress: previously allowed nodes larger than $i$ and those new neighbours of $i$ that are not yet excluded.

\begin{algorithm}[tb]
  \caption{\label{alg:identify_subgraphs} Find the vertex sets of all connected subgraphs of a arbitrary graph $G$. The algorithm assumes that nodes are labeled with integers from $1$ to $|V|$. The parameter $n_{\max}$ can be used to limit the size of the vertex sets returned. $N(i)$ denotes the neighbours of node $i$ in graph $G$.}
\begin{algorithmic}[1]
  \Require $G=(V,L)$ is an undirected graph.
\Statex
\Function{FindConnectedSets}{$G$, $n_{\max}$}
  \State $S_{\textrm{all}} \gets \emptyset$
  \For{$i$ in $V$}
    \State $S \gets \{i \}$
    \State $V_{-} \gets \{j \in V\, | \, j \leq i \}$
    \State $V_{+} \gets \{j \in N(i) \, | \, j > i \}$
    \State \Call{SubFind}{$G$, $n_{\max}$, $S_{\textrm{all}}$, $S$, $V_{-}$, $V_{+}$}
  \EndFor
  \State \Return{$S_{\textrm{all}}$}
\EndFunction
\Statex
\Function{SubFind}{$G$, $n_{\max}$, $S_{\textrm{all}}$, $S$, $V_{-}$, $V_{+}$}
  \State $S_{\textrm{all}} \gets S_{\textrm{all}} \cup \{ S \}$
  \If{$|S| = n_{\max}$}
    \Return
  \EndIf
  \For{$i$ in $V_{+}$}
    \State $S^{*} \gets S \cup \{i \}$
    \State $V_{-}^{*} \gets V_{-} \cup \{j \in V_{+}\,|\,j \leq i \}$
    \State $V_{+}^{*} \gets \{j \in V_{+}\,|\, j > i \} \cup \{j \in N(i)\, |\, j \not \in V_{-}^{*} \}$

     \State \Call{SubFind}{$G$, $n_{\max}$, $S_{\textrm{all}}$, $S^{*}$, $V_{-}^{*}$, $V_{+}^{*}$}
  \EndFor
\EndFunction

\end{algorithmic}
\end{algorithm}

Because the subtrees are distinct at each step, the algorithm will return each connected set at most once. To see that it returns all possible connected set, consider how we could arrive at an arbitrary connected set $S$. The search path is rooted at $i_1 = \min S$. Let $S_k$, $k <  |S|$, be the set of elements added at depth $k$. Because $S$ is connected, there is at least one node in $S \backslash S_k$ that is a neighbour of some node in $S_k$. The only way the construction can fail is if for some $k$ there is a node $i^* \in S \backslash S_k$ that has already been excluded, i.e. it is in $V_{-}$. It is not possible that $i^*$ was excluded in the beginning---the tree was rooted at $i_1 < i^*$ and only nodes smaller than $i_1$ were excluded---so it must have happened during the search. But if $i^*$ was added to $V_{-}$ it means that it was in $V_{+}$ but some larger node of $S$ was added instead, which is a contradiction---in the subtree leading to $S$ we would have added $i^*$. Hence no node $i^*$ can exist and the construction can always proceed until $S$ is obtained.

\section{Event labels with partial order}
\label{appendix:labels}

Pseudo-code for finding the labels is presented in Algorithm \ref{alg:label_nodes}. On lines 2--8 we first initialize all labels to $(1,\emptyset)$, unless there are multiple roots which get each initialized with a unique element. The loop on lines 9--19 then iteratively pushes the labels forward along directed paths, adding new elements when needed: first pick any node with zero in-degree (line 10), and find its children who can not be reached through other children ($V_c^-$) (lines 11--13). The labels of these children are then updated by incrementing the value of $r$ and pushing the set $\set{s}_i$ down to the child (line 14--15), adding a new unique element to each set if there are multiple children to update (lines 16--18).

\begin{algorithm}[h!]
  \caption{\label{alg:label_nodes} Find the labels to denote the ordering of events. The vertices in the input graph $G$ correspond to events, and the graph contains at least one directed path from event $e_i$ to $e_j$ if $e_i$ must take place before $e_j$.}
\begin{algorithmic}[1]
  \Require $G=(V,L)$ is a directed acyclic graph.
  \Require $P(i)$ is the set of nodes from which there is a directed path to $v_i$.
  \Statex
\Function{FindEventLabels}{$G$}
\State $s_{\max} \gets 0$
\State $V_0 \gets \{e_i \in V \, | \, k_{\textrm{in}}(e_i) = 0\}$
\For{$e_i$ in $V$}
  \State $r_i \gets 1$, $\set{s}_i \gets \emptyset$
  \If{$|V_0| > 1$ and $e_i \in V_0$}
    \State $\set{s}_i \gets \{ s_{\max} \}$
    \State $s_{\max} \gets s_{\max} + 1$
  \EndIf
\EndFor
\While{$V \neq \emptyset$}
   \State Pick any node $e_i$ with $k_{\textrm{in}}(e_i) = 0$ in $G$
   \State $V_c \gets \{e_j \in V\, | \,(e_i,e_j) \in L \}$
   \State $V_c^- \gets \{e_j \in V_c \, | \, V_c \cap P(j) = \emptyset\}$
   \ForAll{$e_c$ in $V_c^-$}
     \State $r_c \gets \max\{r_c,r_i+1\}$
     \State $\set{s}_c \gets \set{s}_c \cup \set{s}_i$
     \If{$|V_c^+| > 1$}
        \State $\set{s}_c \gets \set{s}_c \cup \{ s_{\max} \}$
        \State $s_{\max} \gets s_{\max} + 1$
     \EndIf
   \EndFor
   \State Remove $e_i$ from $G=(V,L)$
\EndWhile
\EndFunction
\end{algorithmic}
\end{algorithm}

It is easy to see that this algorithm results in valid labels:
\begin{itemize}
\item If event $e_i$ must come before $e_j$ then there is (at least one) directed path from $e_i$ to $e_j$. The set $\set{s}_i$ will be   pushed along this path and therefore $\set{s}_i \subseteq   \set{s}_j$, and the value $r$ will increase along this path and   therefore $r_i < r_j$.
\item If there is no restriction for the mutual order of $e_i$ and   $e_j$, there is no directed path between these nodes. We can trace   backwards all paths from these nodes to find that either
\begin{enumerate}
\item the paths traced from $e_i$ and from $e_j$ end up in distinct   root nodes and hence $s_i \cap s_j = \emptyset$ because each root   was initialized with a different element,
\item the paths converge to a common node that hence has multiple   children and the two branches were assigned unique elements of   $\set{S}$, and therefore $s_i \not \subseteq s_j$ and $s_j \not   \subseteq s_i$, or
\item both of the above (there can be multiple directed paths leading to these nodes), which also means that $s_i \not \subseteq s_j$ and $s_j \not \subseteq s_i$.
\end{enumerate}
\end{itemize}
The labels this algorithm provides are plain integers when the input graph $G$ contains a total order of events.

\section*{Bibliography}

\end{document}